\begin{document}
%
\title{On zero neighbours and trial sets of linear codes}
%
%
%

\author{Mijail Borges-Quintana, \and~{Miguel~\'Angel Borges-Trenard}, \and~Edgar Mart\'inez-Moro
\thanks{M.~Borges-Quintana and M.A.~Borges-Trenard are with the Department
of Department of Mathematics, Faculty of Mathematics and Computer Science, Universidad de Oriente, Santiago de Cuba, Cuba. \tt{{mijail@csd.uo.edu.cu, mborges@csd.uo.edu.cu}}.}
\thanks{E.~Mart\'inez-Moro  is with the Institute of Mathematics IMUVa, University of Valladolid. Valladolid, Castilla, Spain. \tt{edgar@maf.uva.es}}
\thanks{Manuscript received --; revised  --.}}

%
%

\markboth{ }%
{Shell \MakeLowercase{\textit{et al.}}: Bare Demo of IEEEtran.cls for Journals}
%



\theoremstyle{theorem}
\newtheorem{theorem}{Theorem} 
\newtheorem{proposition}{Proposition} 
\theoremstyle{definition}
\newtheorem{definition}{Definition} 

\theoremstyle{remark}
\newtheorem{remark}{Remark} 

\maketitle

\begin{abstract}
In this work we   study the set
of  leader codewords of a non-binary linear code. This set has some nice properties related to the  monotonicity  of the weight compatible order on the generalized support of a
vector in $\mathbb F_q^n$. This allows us to describe a test set,  a trial set and the set zero neighbours in terms of the leader codewords.
\end{abstract}

\begin{IEEEkeywords}
Linear codes, monotone functions, test set, trial set, zero neighbours
\end{IEEEkeywords}

%
\IEEEpeerreviewmaketitle

\section{Introduction}
%
%
%
%
\IEEEPARstart{A}s it is pointed in \cite{klove} it is common folklore in the theory
of binary codes that there is an ordering on the coset leaders chosen as
the lexicographically smallest minimum weight vectors that provides a
monotone structure. In the binary case this is expressed as follows: if
$\mathbf x$ is a coset leader  and $\mathbf y\subseteq \mathbf x$ (i.e.
$y_i\leq x_i$ for all $i$)  then $\mathbf y$ is also a coset leader.
This nice property has been proved of great value, see for example
\cite{Zemor}, and it has been used for analyzing the error-correction
capability of binary linear codes \cite{klove}. In this last paper
the authors introduce  the concept of a \textit{trial set} of codewords
and they provide a gradient-like decoding algorithm based on this set.

Despite the interest of this topic no generalization of this ideas is
known by the authors of this communication to the $q$-ary linear case,
that is, to codes over $\mathbb F_q$ the field of $q=p^r$ elements where
$p$ is a prime. In this paper we provide a non straightforward
generalization of this ideas to the $q$-ary case. In
Section~\ref{sec:Prel} we introduce the idea of a generalized support of a
vector in $\mathbb F_q^n$ based on considering $\mathbb F_q$ as a
$\mathbb F_p$ vector space using the so called $p$-ary expansion. Section~\ref{sec:LCodW} is devoted to the study
of the leader codewords of a code as a zero neighbor set and their
properties.  The leader codewords have been previously defined for the binary case in \cite{borges:2014} but the analogous concept in the $q$-ary case needs a subtle and non-trivial generalization as well as their properties. Finally in Section~\ref{Tset-LCW} we analyse the correctable and uncorrectable errors defining a trial set for a linear code from the set of leader codewords.

\section{Preliminaries}\label{sec:Prel}

From now on we  shall denote by  $\mathbb F_q$  the finite field with $q$ elements, with $q=p^m$ and $p$ a prime. A \emph{linear code} $\mathcal C$ over $\mathbb F_q$ of length $n$ and dimension $k$, or an $[n,k]$ linear code for short, is a $k$-dimensional subspace of $\mathbb F_q^n$. We will call the vectors $\mathbf v$ in $\mathbb F_q^n$ words and the particular case where $\mathbf v\in \mathcal C$, codewords. For every word $\mathbf v\in \mathbb F_q^n$ its \emph{support} is define as its support as a vector in $\mathbb F_q^n$, i.e. $\mathrm{supp}(\mathbf v) = \left\{ i \mid v_i \neq 0\right\}$ and its \emph{Hamming weight}, denoted by $\mathrm{w_H}(\mathbf v)$ as the cardinality of $\mathrm{supp}(\mathbf v)$. 

As usual the \emph{Hamming distance}, $\mathrm{d_H}(\mathbf x, \mathbf y)$, between two words $\mathbf x, ~\mathbf y \in \mathbb F_q^n$ is the number of places where they differ, or equivalently, 
$\mathrm{d_H}(\mathbf x, \mathbf y) = \mathrm{w_H}(\mathbf x - \mathbf y)$. The \emph{minimum distance} $\mathrm{d}(\mathcal C)$ of a linear code $\mathcal C$ is defined as the minimum weight among all nonzero codewords. 

The words of minimal Hamming weight in the cosets of $\mathbb F_q^n / \mathcal C$ is the \textit{set of coset leaders} of the code $\mathcal C$ in $\mathbb F_q^n$ and we will denote it by $\mathrm{CL}(\mathcal C)$.  $\mathrm{CL}(\mathbf y)$ will denote the subset of coset leaders corresponding to the coset $\mathbf y+\mathcal C$.
 Given a coset $\mathbf y+\mathcal C$ we define the \emph{weight of the coset} $\mathrm{w_H}(\mathbf y+\mathcal C)$ as the smallest Hamming weight among all vectors in the coset, or equivalently the weight of one of its leaders. It is well known that if  $t= \lfloor \frac{d(\mathcal C)-1}{2}\rfloor$ is the \emph{error-correcting capacity} of $\mathcal C$ where $\lfloor \cdot \rfloor$ denotes the greatest integer function then every coset of weight at most $t$ has a unique coset leader.

 We will asume  that $\mathbb F_q= \frac{\mathbb F_p[X]}{\left(f(X)\right)}$ where $f(X)$ is chosen such that $f(X)$ is an irreducible polynomial over $\mathbb F_p$ of degree $m$. Let $\beta$ be a root of $f(X)$, then an equivalent representation of $\mathbb F_q$ is $\mathbb F_p[\beta]$, i.e. any element of $a \in \mathbb F_q$ is represented as 
$$a_1 + a_2 \beta + \ldots + a_{m}\beta^{m-1} \hbox{ with }
a_i \in \mathbb F_p \hbox{ for } i \in \{1, \ldots, m\}.$$
For a word $\mathbf v=(v_1, \ldots, v_n) \in \mathbb F_q^n$, such that the $i$-th component of $\mathbf v$ is
$$v_i = a_{i,1} + a_{i,2} \beta + \ldots + a_{i,m}\beta^{m-1}$$
we define the \emph{generalized support} of a vector $\mathbf v$  as the support of the $nm$-tuple given by the concatenations of the $p$-adic expansion of each component $\mathbf v_i$ of $\mathbf v$, i.e.{
$$\mathrm{supp}_{\mathrm{gen}}(\mathbf v) = \{\mathrm{supp}((a_{i,0}, \ldots, a_{i,m-1}))\}:\, i=1\ldots n],$$
and $\mathrm{supp}_{\mathrm{gen}}(\mathbf v)[i]=\mathrm{supp}((a_{i,0}, \ldots, a_{i,m-1}))$.}

We will say that $(i,j)\in \mathrm{supp}_{\mathrm{gen}}(\mathbf v)$ if the corresponding $a_{i,j}$ is not zero. From now on the set $\left\{ \mathbf e_{ij}=\beta^{j-1}\mathbf e_i:\; i = 1, \ldots, n;\, j= 1,\ldots m \right\}$ will be denoted as $\mathrm{Can}(\mathbb F_q,f)$ and it represents the canonical basis of {$(\mathbb F_q^n,+)$, the additive monoid $\mathbb F_q^n$ with respect to the ``+" operation, where $f$ is the irreducible polynomial used to define $\mathbb F_q^n$}.

{We will say that a representantion of a word $\mathbf w$ as an $nm$-tuple is in \textit{standard form} if it can not be reduced with respect to the ``+" operation in the representation, i.e., it is the minimal representation for $\mathbf w$ {with respect to the generalized support}. We will denote the standard form of $\mathbf v$ as $\mathrm{SF}(\mathbf v,f)$. Therefore, $\mathbf w$ is in stardard form if $\mathbf w \equiv \mathrm{SF}(\mathbf w,f)$ (we will also say $\mathbf w \in \mathrm{SF}(\mathbb F_q^n,f)$), note that we do not use common equality symbol in order to emphasize that the equality is considered as literal expresions in terms of the basis $\mathrm{Can}(\mathbb F_q,f)$.}

{\begin{remark} From now on we will use  $\mathrm{Can}(\mathbb F_q)$ and $\mathrm{SF}(\mathbb F_q^n)$ instead of $\mathrm{Can}(\mathbb F_q,f)$ and $\mathrm{SF}(\mathbb F_q^n,f)$ respectively since it is clear that different elections of $f$ or $\beta$ provide equivalent generalized supports.
\end{remark}}

{Given $\mathbf x,\mathbf y\in (\mathbb F_q^n,+)$, $\mathbf x=\sum_{i,j}x_{ij}\mathbf e_{ij}$, $\mathbf y=\sum_{i,j}y_{ij}\mathbf e_{ij}$, $\mathbf x\subset \mathbf y$ if $x_{ij}\leq y_{ij}$ for all $i\in [1,n]$, and $j\in[1,m]$. An \textit{admisible order} on $(\mathbb F_q^n,+)$ is a total order $<$ on $(\mathbb F_q^n,+)$ satisfying the following two conditions
\begin{enumerate}
\item $\mathbf 0 <\mathbf x$, for all $\mathbf x\in (\mathbb F_q^n,+),\,\mathbf x\neq \mathbf 0$.
\item If $\mathbf x<\mathbf y$, then $\mathbf x \oplus \mathbf z<\mathbf y \oplus \mathbf z$, for all $\mathbf z\in (\mathbb F_q^n,+)$, where $\oplus$ denotes the operation of grouping the same canonical words of $\mathrm{Can}(\mathbb F_q)$ but without reducing the coefficients module $p$.
\end{enumerate}
}

\begin{definition}
\label{Voronoi:Definition}
We define the  \emph{Voronoi region} of a codeword $\mathbf c \in \mathcal C$ and we denote it by $\mathrm D(\mathbf c)$ as the set
\begin{equation*}\mathrm D(\mathbf c) = \left\{ \mathbf y \in \mathbb F_q^n \mid
\mathrm{d_H}(\mathbf y, \mathbf c)\leq \mathrm{d_H}(\mathbf y, \mathbf c'),\,\forall \mathbf c'\in \mathcal C\setminus \{ \mathbf 0 \} \right\}.\end{equation*}
\end{definition}
Note that the set of all the Voronoi regions for a given linear code $\mathcal C$ covers the space $\mathbb F_q^n$ and also it is clear that $\mathrm D(\mathbf 0) = \mathrm{CL}(\mathcal C)$. However, some words in $\mathbb F_q^n$ may be contained in several regions.

For any subset $A \subset \mathbb F_q^n$ we define $\mathcal X(A)$ as the set of words at Hamming distance $1$ from $A$, i.e.
$$\mathcal X (A) = \left\{ \mathbf y \in \mathbb F_q^n \mid \min \left\{ \mathrm{d_H}(\mathbf y, \mathbf a)~:~ \mathbf a \in A\right\} = 1\right\}.$$
We define the \emph{boundary} of $A$ as $\delta (A) = \mathcal X (A) \cup \mathcal X (\mathbb F_2^n \setminus A)$.

\begin{definition}
A nonzero codeword $\mathbf c \in \mathcal C$ is called a \emph{zero neighbour} if its Voronoi region shares a common boundary with the set of coset leaders, i.e. 
$$\delta (\mathrm D(\mathbf z)) \cap \delta (\mathrm D(\mathbf 0)) \neq \emptyset.$$
We will denote by $\mathcal Z(\mathcal C)$ the set of all zero neighbours of $\mathcal C$
$$\mathcal Z(\mathcal C) = \left\{ \mathbf z \in \mathcal C \setminus \{ \mathbf 0\}
~:~ \delta (\mathrm D(\mathbf z)) \cap \delta (\mathrm D(\mathbf 0)) \neq \emptyset
\right\}.$$
\end{definition}

\begin{definition}
A \emph{test-set} $\mathcal T$ for a given linear code $\mathcal C$ is a set of codewords such that every word $\mathbf y$ \begin{enumerate}
  \item either $\mathbf y$ lies in $\mathrm D(\mathbf 0)$, the Voronoi region of the all-zero vector, 
  \item or there exists $\mathbf v\in \mathcal T$ such that $\mathrm{w_H}(\mathbf y- \mathbf v)< \mathrm{w_H}(\mathbf y)$.
\end{enumerate}
\end{definition}

The set of zero neighbours is a test set, also from the set of zero neightbours can be obtained any minimal test set according to the cardinality of the set  \cite{barg:1998}.

\section{Zero neighbours and leader codewords}
\label{sec:LCodW}

The first idea that allow us to  compute incrementally de set of all coset leaders for a linear code was introduced in \cite{borges:2007a}. In that paper we used the additive structure of $\mathbb F_q^n$ with the set of canonical generators $\mathrm{Can}(\mathbb F_q)$. Unfortunatly in \cite{borges:2007a}  most of the chosen coset representatives may not be coset leaders if the weight of the coset is greater than the error-correcting capability of the code.
\begin{theorem}
\label{Theorem1}
{Let $\mathbf w\in \mathrm{SF}(\mathbb F_q^n)$ be an element in  $\mathrm{CL}(\mathcal C)$, and $(i,j)\in \mathrm{supp_{gen}}(\mathbf w)$. Let $\mathbf y\in \mathrm{SF}(\mathbb F_q^n)$ s.t. $\mathbf w = \mathbf y + \mathbf e_{ij}$ then 
$${\mathrm{w_H}(\mathbf y) \leq \mathrm{w_H}(\mathbf y+\mathcal C)+1}.$$}
\end{theorem}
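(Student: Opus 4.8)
The plan is to relate the coset of $\mathbf{y}$ to the coset of the leader $\mathbf{w}$ through the single-coordinate perturbation $\mathbf{e}_{ij} = \beta^{j-1}\mathbf{e}_i$, and to exploit two features of the hypothesis: that $\mathbf{w}$ has minimal weight in its own coset, and that $(i,j)\in \mathrm{supp_{gen}}(\mathbf{w})$ forces the $i$-th component $w_i$ of $\mathbf{w}$ to be nonzero. Since $\mathbf{w} = \mathbf{y} + \mathbf{e}_{ij}$ and $\mathbf{e}_{ij}$ is supported only on coordinate $i$, the words $\mathbf{w}$ and $\mathbf{y}$ agree in every coordinate except possibly the $i$-th, so all weight comparisons reduce to what happens in that single coordinate.

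First I would establish that $\mathrm{w_H}(\mathbf{y}) \leq \mathrm{w_H}(\mathbf{w})$. Because $w_i \neq 0$ --- this is exactly where the hypothesis $(i,j)\in \mathrm{supp_{gen}}(\mathbf{w})$ enters --- passing from $\mathbf{w}$ to $\mathbf{y} = \mathbf{w} - \mathbf{e}_{ij}$ alters only the $i$-th coordinate: either $y_i$ remains nonzero, so the weight is unchanged, or $y_i = 0$, so the weight drops by one. In neither case can it increase. I would stress that without the support condition the $i$-th coordinate of $\mathbf{w}$ could be zero while that of $\mathbf{y}$ is nonzero, and this inequality could fail; so this is the step where the hypothesis genuinely matters.

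Next I would bound $\mathrm{w_H}(\mathbf{w})$ from above by $\mathrm{w_H}(\mathbf{y}+\mathcal{C}) + 1$. Let $\mathbf{c}\in \mathrm{CL}(\mathbf{y})$ be a leader of the coset of $\mathbf{y}$, so that $\mathbf{c} - \mathbf{y}\in \mathcal{C}$ and $\mathrm{w_H}(\mathbf{c}) = \mathrm{w_H}(\mathbf{y}+\mathcal{C})$. Then $\mathbf{c} + \mathbf{e}_{ij} = \mathbf{w} + (\mathbf{c} - \mathbf{y})$ lies in the coset $\mathbf{w}+\mathcal{C}$, so the minimality of the leader $\mathbf{w}$ gives $\mathrm{w_H}(\mathbf{w}) \leq \mathrm{w_H}(\mathbf{c} + \mathbf{e}_{ij})$. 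Since $\mathbf{e}_{ij}$ perturbs only coordinate $i$, adding it to $\mathbf{c}$ raises the Hamming weight by at most one, that is $\mathrm{w_H}(\mathbf{c} + \mathbf{e}_{ij}) \leq \mathrm{w_H}(\mathbf{c}) + 1$. Combining these yields $\mathrm{w_H}(\mathbf{w}) \leq \mathrm{w_H}(\mathbf{y}+\mathcal{C}) + 1$.

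Chaining the two estimates produces $\mathrm{w_H}(\mathbf{y}) \leq \mathrm{w_H}(\mathbf{w}) \leq \mathrm{w_H}(\mathbf{y}+\mathcal{C}) + 1$, which is the claim. The argument is short; the only real subtlety --- and the step I would handle most carefully --- is the twofold role of coordinate $i$: the support hypothesis is needed so that dropping $\mathbf{e}_{ij}$ does not raise the weight on the way from $\mathbf{w}$ to $\mathbf{y}$, whereas for the reverse comparison one only uses the elementary fact that adding a vector supported on a single coordinate changes the Hamming weight by at most one.
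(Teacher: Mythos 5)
Your proof is correct and is essentially the argument the paper intends: the paper gives no explicit proof but defers to the binary case (Theorem 2.1 of the cited work on binary codes), whose core is exactly your second step --- take a leader $\mathbf c$ of $\mathbf y+\mathcal C$, observe $\mathbf c+\mathbf e_{ij}\in\mathbf w+\mathcal C$, and invoke the minimality of the coset leader $\mathbf w$ to get $\mathrm{w_H}(\mathbf w)\leq \mathrm{w_H}(\mathbf c)+1$. Your first step, where $(i,j)\in\mathrm{supp_{gen}}(\mathbf w)$ only guarantees $\mathrm{w_H}(\mathbf y)\leq \mathrm{w_H}(\mathbf w)$ rather than a strict drop, is precisely the $q$-ary subtlety that turns the binary conclusion ($\mathbf y$ is itself a coset leader) into the weaker bound $\mathrm{w_H}(\mathbf y)\leq \mathrm{w_H}(\mathbf y+\mathcal C)+1$ stated here, so you have correctly identified and filled in the adaptation the paper leaves implicit.
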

The proof of the Theorem is  analogous to the binary case in  \cite[Theorem 2.1]{borges:2014}. In the situation of  Theorem~\ref{Theorem1} above we will say that the coset leader $\mathbf w$ is an \textit{ancestor} of the word $\mathbf y$, and that $\mathbf y$ is a \textit{descendant} of $\mathbf w$. In the binary case this definitions behave as the ones in \cite[\S11.7]{huffman:2003} but in the  case $q\neq 2$ there is a subtle difference, a coset leader could be an ancestor of a coset leader or an ancestor of a word at Hamming distance $1$ to a coset leader (this last case is not possible in the binary case).

Thus, in order to incrementally generate all coset leaders  starting from $\mathbf 0$ the all zero codeword and adding elements in $\mathrm{Can}(\mathbb F_q)$,  we must consider all words with distance $1$ to a coset leader. Note that to achieve all words at distance one from the coset leaders some  words at distance $2$ are needed.

\begin{theorem}
\label{Theorem2}$\, $
\begin{enumerate}
  \item Let $\mathbf w$ be a word in $\mathbb F_q^n$ such that $\mathrm{d_H}(\mathbf w,\mathrm{CL}(\mathcal C))=1$ and $\mathbf w = \mathbf y + \mathbf e_{ij}$ for some word $\mathbf y \in \mathbb F_q^n$, $(i,j)\in \mathrm{supp_{gen}}(\mathbf w)$ and $\mathbf w, \mathbf y \in \mathrm{SF}(\mathbb F_q^n)$, then 
  \begin{equation}\label{eq:+2}\mathrm{w_H}(\mathbf y) \leq \mathrm{w_H}(\mathbf y+\mathcal C)+2.\end{equation} 
  \item {If equality holds in Eq.~(\ref{eq:+2}) then  $\mathrm{supp_{gen}}(\mathbf y)[i]\neq \emptyset$ and $\mathrm{supp}_{gen}(\mathbf v)[i]=\emptyset$ for any $\mathbf v \in \mathrm{CL}(\mathbf y)$.}
\end{enumerate}
\end{theorem}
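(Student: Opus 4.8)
The plan is to imitate the proof of Theorem~\ref{Theorem1} (binary analogue \cite[Theorem 2.1]{borges:2014}), absorbing the extra Hamming distance to the leaders as one more unit. Fix a coset leader $\mathbf c\in\mathrm{CL}(\mathcal C)$ with $\mathrm{d_H}(\mathbf w,\mathbf c)=1$ and write $\mathbf w=\mathbf c+\mathbf u$, where $\mathrm{w_H}(\mathbf u)=1$ with $\mathbf u$ supported on a coordinate $k$. Two cheap observations reduce the whole statement: since $(i,j)\in\mathrm{supp_{gen}}(\mathbf w)$ the $i$-th coordinate of $\mathbf w$ is nonzero, so deleting $\mathbf e_{ij}$ can only keep or lower the Hamming weight, giving $\mathrm{w_H}(\mathbf y)\le\mathrm{w_H}(\mathbf w)$; and $\mathbf c$ being a leader together with $\mathbf c-\mathbf y=\mathbf e_{ij}-\mathbf u$ always yields $\mathrm{w_H}(\mathbf c)=\mathrm{w_H}(\mathbf c+\mathcal C)\le\mathrm{w_H}(\mathbf y+\mathcal C)+\mathrm{w_H}(\mathbf e_{ij}-\mathbf u)$. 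Thus it suffices to prove the target $\mathrm{w_H}(\mathbf w)\le\mathrm{w_H}(\mathbf y+\mathcal C)+2$, after which $\mathrm{w_H}(\mathbf y)\le\mathrm{w_H}(\mathbf w)$ finishes part~1.

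Next I would split on the coordinate $k$. If $k=i$, then $\mathbf e_{ij}-\mathbf u$ is supported on the single coordinate $i$, so $\mathrm{w_H}(\mathbf e_{ij}-\mathbf u)\le1$ and hence $\mathrm{w_H}(\mathbf c)\le\mathrm{w_H}(\mathbf y+\mathcal C)+1$; with $\mathrm{w_H}(\mathbf w)\le\mathrm{w_H}(\mathbf c)+1$ the target follows. If $k\neq i$ then $c_i=w_i\neq0$, and I distinguish two sub-cases. When $c_k\neq0$, adding $\mathbf u$ at $k$ cannot raise the weight, so $\mathrm{w_H}(\mathbf w)\le\mathrm{w_H}(\mathbf c)\le\mathrm{w_H}(\mathbf y+\mathcal C)+2$ (now using the weight-$2$ bound $\mathrm{w_H}(\mathbf e_{ij}-\mathbf u)=2$), and we are done. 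The genuinely delicate sub-case is $k\neq i$ with $c_k=0$ and $y_i\neq0$: here $\mathrm{w_H}(\mathbf w)=\mathrm{w_H}(\mathbf c)+1$, and the naive triangle estimates only return $+3$, because moving from the coset of $\mathbf c-\mathbf e_{ij}$ to that of $\mathbf y$ shifts by $\mathbf u$ and spends a spurious unit.

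To beat this I would exploit that $c_i=w_i$ forces $(i,j)\in\mathrm{supp_{gen}}(\mathbf c)$, so Theorem~\ref{Theorem1} applies to the \emph{genuine} leader $\mathbf c$ at the position $(i,j)$: with $\mathbf y'=\mathbf c-\mathbf e_{ij}$ one gets $\mathrm{w_H}(\mathbf y')\le\mathrm{w_H}(\mathbf y'+\mathcal C)+1$, and since $y'_i=y_i\neq0$ we have $\mathrm{w_H}(\mathbf y')=\mathrm{w_H}(\mathbf c)$, whence $\mathrm{w_H}(\mathbf y'+\mathcal C)\ge\mathrm{w_H}(\mathbf c)-1$. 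As $\mathbf y=\mathbf y'+\mathbf u$ with $y'_k=0$, the target $\mathrm{w_H}(\mathbf w)\le\mathrm{w_H}(\mathbf y+\mathcal C)+2$ is now equivalent to $\mathrm{w_H}(\mathbf y+\mathcal C)\ge\mathrm{w_H}(\mathbf y'+\mathcal C)$, i.e. to the claim that \emph{filling in the previously empty coordinate $k$ does not lower the coset weight}. This is precisely a monotonicity statement for the generalized support of coset leaders, and I expect it to be the main obstacle: it cannot be had from bare distance inequalities (where it genuinely fails, as the excess $\mathrm{w_H}(\mathbf w)-\mathrm{w_H}(\mathbf w+\mathcal C)$ can reach $2$), and must instead invoke the structural machinery of the paper. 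Granting it, $\mathrm{w_H}(\mathbf y+\mathcal C)\ge\mathrm{w_H}(\mathbf c)-1=\mathrm{w_H}(\mathbf w)-2\ge\mathrm{w_H}(\mathbf y)-2$, which is Eq.~(\ref{eq:+2}).

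For the second assertion I would run the equality case through the same analysis, where every slack above must vanish. Equality cannot coexist with the deletion of $\mathbf e_{ij}$ dropping the Hamming weight (that branch carries an extra unit of room), so it forces $y_i\neq0$, i.e. $\mathrm{supp_{gen}}(\mathbf y)[i]\neq\emptyset$. For the second condition, let $\mathbf v$ be any leader of $\mathbf y+\mathcal C$; then $\mathbf v+(\mathbf e_{ij}-\mathbf u)$ lies in the coset of $\mathbf c$, so $\mathrm{w_H}(\mathbf c)\le\mathrm{w_H}(\mathbf v+(\mathbf e_{ij}-\mathbf u))\le\mathrm{w_H}(\mathbf v)+2$, and equality in (\ref{eq:+2}) pins this chain down so that the increment at the coordinate $i$ where $\mathbf e_{ij}$ acts is a \emph{fresh} one; that is, $\mathrm{supp_{gen}}(\mathbf v)[i]=\emptyset$ for every $\mathbf v\in\mathrm{CL}(\mathbf y)$, as claimed. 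The hardest point of the whole proof remains the single residual sub-case of part~1, whose resolution truly uses the monotone behaviour of coset leaders rather than generic distance arguments.
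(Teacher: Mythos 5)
You have correctly isolated the crux: writing $\mathbf w=\mathbf c+\mathbf u$ with $\mathbf u=a\mathbf e_k$, every configuration except $k\neq i$, $c_k=0$, $y_i\neq 0$ follows from routine weight and coset-leader inequalities (you do skip the fourth sub-case $k\neq i$, $c_k=0$, $y_i=0$, but it is harmless, since there $\mathrm{w_H}(\mathbf y)=\mathrm{w_H}(\mathbf w)-1=\mathrm{w_H}(\mathbf c)\leq \mathrm{w_H}(\mathbf y+\mathcal C)+2$). The genuine gap is in the delicate sub-case itself: the pivotal claim, that passing from $\mathbf y'=\mathbf y-a\mathbf e_k$ (with $y'_k=0$) to $\mathbf y$ cannot lower the coset weight, is never proved --- you explicitly ``grant'' it and defer to unspecified ``structural machinery of the paper''. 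No such machinery can exist, because the claim is false: for the ternary repetition code $\mathcal C=\{(0,0,0),(1,1,1),(2,2,2)\}\subset\mathbb F_3^3$, the words $\mathbf y'=(1,0,1)$ and $\mathbf y=(1,1,1)$ satisfy $\mathrm{w_H}(\mathbf y'+\mathcal C)=1>0=\mathrm{w_H}(\mathbf y+\mathcal C)$. Your treatment of part~2 has the same character (the ``equality pins this chain down'' step is an assertion, not a deduction), but that is secondary.

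The deeper point is that the gap cannot be filled: the statement is false as written, and it fails exactly in your delicate sub-case. In the same code take $\mathbf w=(2,1,1)$, $(i,j)=(1,1)$, $\mathbf y=(1,1,1)$; then $\mathbf w=\mathbf y+\mathbf e_{11}$, both words are in standard form, and $(1,1)\in\mathrm{supp_{gen}}(\mathbf w)$. The coset of $\mathbf w$ is $\{(2,1,1),(0,2,2),(1,0,0)\}$, so $\mathbf w$ is not a coset leader, while $\mathbf c=(2,0,1)$ is one (its coset $\{(2,0,1),(0,1,2),(1,2,0)\}$ consists entirely of weight-$2$ words) and $\mathrm{d_H}(\mathbf w,\mathbf c)=1$; hence $\mathrm{d_H}(\mathbf w,\mathrm{CL}(\mathcal C))=1$. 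Yet $\mathrm{w_H}(\mathbf y)=3$ while $\mathrm{w_H}(\mathbf y+\mathcal C)+2=2$. Here $k=2\neq 1=i$, $c_2=0$, $y_1\neq 0$, precisely your unresolved case; an analogous example over $\mathbb F_4$ (repetition code, $\mathbf w=(1+\beta,1,1)$, $\mathbf y=(1,1,1)$, $\mathbf c=(1+\beta,0,1)$) shows this is not an artifact of odd characteristic. For comparison with the source: the paper offers no proof of this theorem at all, only the remark following it, and that remark's justification tacitly assumes that the weight excess $\mathrm{w_H}(\mathbf y)-\mathrm{w_H}(\mathbf y+\mathcal C)$ is bounded by $\mathrm{d_H}(\mathbf y,\mathrm{CL}(\mathcal C))$, whereas the true bound is twice that distance (in the example the excess is $3$ and the distance is $2$), which is exactly the loophole. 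So your instinct that the last sub-case ``cannot be had from bare distance inequalities'' was right, but for a stronger reason than you suspected: it cannot be had at all, and any repair must strengthen the hypotheses --- for instance requiring the nearby coset leader to lie in $\mathbf w$'s own coset, which gives $\mathrm{w_H}(\mathbf w)\leq\mathrm{w_H}(\mathbf w+\mathcal C)+1$ and then part~1 immediately.
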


Note that in the second case, i.e. when equality in Eq.~(\ref{eq:+2})  holds then the condition implies that 
$\mathrm{d_H}(w,\mathrm{CL}(\mathcal C))=1$. It is clear that if $\mathrm{w_H}(\mathbf y) > \mathrm{w_H}(\mathbf y+\mathcal C)+2$ then $\mathrm{d_H}(\mathbf w,\mathrm{CL}(\mathcal C))>1$ since the addition of a $e_{ij}$ can only modify the distance from $-1$ to $1$.

\subsection{Weight compatible order}

{Given $\prec_1$ an admisible order on $(\mathbb F_q^n,+)$ we define the \textit{ weight compatible order} $\prec$ on  $(\mathbb F_q^n,+)$} associated to  $\prec_1$  as the ordering given by
\begin{enumerate}
  \item  $\mathbf x \prec \mathbf y$ if $\mathrm{w_H}(\mathbf x) < \mathrm{w_H}( \mathbf y)$ or \item if $\mathrm{w_H}( \mathbf x)= \mathrm{w_H}( \mathbf y)$ then $\mathbf x \prec_1 \mathbf y$.
\end{enumerate}
I.e. the words are ordered acording their weights and  $\prec_1$ break ties. This class of orders is a subset of the class  of $\alpha$-orderings monotone in \cite{klove}. In fact we will  need a little more than monotonocity, for the purpose of this work we will also need that for every pair {$\mathbf a, \mathbf b\in \mathrm{SF}(\mathbb F_q^n)$
\begin{equation}
\label{eq:subpal}
\mbox{ if } \mathbf a\subset b, \mbox{ then } \mathbf a \prec \mathbf b.
\end{equation}
}

Note that for any  weight compatible ordering $\prec$  every strictly decreasing sequence  terminates (due to the finiteness of the set $\mathbb F_q^n$) and  the condition in Eq.~(\ref{eq:subpal}) is fulfilled.

\begin{definition}
\label{List-Def-LC}
We define the object $\tt List$ as an ordered set of elements in $\mathbb F_q^n$ w.r.t. a weight compatible order $\prec$ verifying the following properties:
\begin{enumerate}
\item $\mathbf 0\in \tt{List}$.
\item Criterion 1: If $\mathbf v \in \tt{List}$ and $\mathrm{w}_H(\mathbf v) = \mathrm{w}_H\left(N(\mathbf v)\right)$ then $\left\{ \mathbf v + \mathbf e_{ij} \mid \mathbf v + \mathbf e_{ij}\in \mathrm{SF}(\mathbb F_q^n) \right\} \subset \tt{List}$. 
\item Criterion 2: If $\mathbf v \in \tt{List}$ and $\mathrm{w}_H(\mathbf v) = \mathrm{w}_H\left(N(\mathbf v)\right)+1$ then $\left\{ \mathbf v + \mathbf e_{ij} \mid \mathbf v + \mathbf e_{ij}\in \mathrm{SF}(\mathbb F_q^n) \right\} \subset \tt{List}$. 
\item Criterion 3: If $\mathbf v \in \tt{List}$ and $\mathrm{w}_H(\mathbf v) = \mathrm{w}_H\left(N(\mathbf v)\right)+2$ then $\left\{ \mathbf v + \mathbf e_{ij} \mid (i,j)\in I\right\} \subset \tt{List}$, where $I=\{(i,j)\in \mathrm{supp}_{gen}(\mathbf v)\mid \; \mathbf v + \mathbf e_{ij}\in \mathrm{SF}(\mathbb F_q^n),\,\mathrm{supp}_{gen}(\mathbf v^\prime)[i]=\emptyset,\mbox{ for all } \mathbf v^\prime\in \mathrm{CL}(\mathbf v)\}$, 
\end{enumerate}
{where $N(\mathbf v) = \min_{\prec}\left\{ \mathbf w \mid \mathbf w \in \tt{List}\cap \left( \mathcal C+\mathbf v\right)\right\}$.} We denote by $\mathcal N$ the set of distinct $N(\mathbf v)$ with $\mathbf v\in \tt{List}$.
\end{definition}

\begin{remark}
\label{Extra-Remark}
Observe that if $\mathbf v \in \mathbb F_q^n$ satisfies Criterion~1 in Definition~\ref{List-Def-LC} then $\mathbf v \in \mathrm{CL}(\mathcal C)$. In particular, when $\mathbf v$ is the first element of ${\tt List}$ that belongs to $\mathcal C+ \mathbf v$, then $N(\mathbf v) = \mathbf v$.
\end{remark}

\begin{theorem}
\label{Theorem3}
Let $\mathbf w\in \mathbb F_q^n$. If $\mathrm{d_H}(\mathbf w,\mathrm{CL}(\mathcal C))\leq 1$ then $\mathbf w \in \tt{List}$.
\end{theorem}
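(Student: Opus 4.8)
The plan is to argue by induction along the weight compatible order $\prec$. This order is well founded, since every strictly decreasing sequence terminates, and it is the right order for a peeling argument because whenever $\mathbf{y}=\mathbf{w}-\mathbf{e}_{ij}$ for some $(i,j)\in\mathrm{supp_{gen}}(\mathbf{w})$ we have $\mathbf{y}\subset\mathbf{w}$, so Eq.~(\ref{eq:subpal}) forces $\mathbf{y}\prec\mathbf{w}$. The base case $\mathbf{w}=\mathbf{0}$ holds by the first item of Definition~\ref{List-Def-LC}. So I fix $\mathbf{w}\neq\mathbf{0}$ with $\mathrm{d_H}(\mathbf{w},\mathrm{CL}(\mathcal C))\le 1$ and assume every $\mathbf{u}\prec\mathbf{w}$ with $\mathrm{d_H}(\mathbf{u},\mathrm{CL}(\mathcal C))\le 1$ already lies in ${\tt List}$.

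First I would write $\mathbf{w}=\mathbf{y}+\mathbf{e}_{ij}$ with $\mathbf{y}\in\mathrm{SF}(\mathbb F_q^n)$ for a suitably chosen $(i,j)\in\mathrm{supp_{gen}}(\mathbf{w})$, and check that $\mathbf{y}$ again satisfies $\mathrm{d_H}(\mathbf{y},\mathrm{CL}(\mathcal C))\le1$, so that the inductive hypothesis applies to it. When $\mathbf{w}$ is itself a coset leader any $(i,j)$ works, since then $\mathbf{y}$ differs from the coset leader $\mathbf{w}$ only in coordinate $i$, whence $\mathrm{d_H}(\mathbf{y},\mathbf{w})\le1$. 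When $\mathrm{d_H}(\mathbf{w},\mathrm{CL}(\mathcal C))=1$, I would fix a coset leader $\mathbf{c}$ with $\mathrm{d_H}(\mathbf{w},\mathbf{c})=1$ and let $i_0$ be the single coordinate in which they differ; peeling a digit of coordinate $i_0$ (possible precisely when $w_{i_0}\neq0$) leaves $\mathbf{y}$ and $\mathbf{c}$ equal outside $i_0$, so again $\mathrm{d_H}(\mathbf{y},\mathbf{c})\le1$. In all these situations the inductive hypothesis yields $\mathbf{y}\in{\tt List}$; the exceptional configuration $w_{i_0}=0$ is postponed to the last paragraph.

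Next I would compute the defect $\mathrm{w_H}(\mathbf{y})-\mathrm{w_H}(N(\mathbf{y}))$ and match it to the three criteria. A preliminary observation is that $N(\mathbf{y})$ attains the coset weight $\mathrm{w_H}(\mathbf{y}+\mathcal C)$: if $\mathbf{y}$ is itself a coset leader then $\mathbf{y}\in{\tt List}$ already witnesses this, and otherwise every coset leader of $\mathbf{y}+\mathcal C$ has weight $\mathrm{w_H}(\mathbf{y}+\mathcal C)<\mathrm{w_H}(\mathbf{y})\le\mathrm{w_H}(\mathbf{w})$, hence is $\prec\mathbf{w}$ and lies in ${\tt List}$ by the inductive hypothesis, so in either case $\mathrm{w_H}(N(\mathbf{y}))=\mathrm{w_H}(\mathbf{y}+\mathcal C)$. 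Now Theorem~\ref{Theorem1} (for $\mathbf{w}\in\mathrm{CL}(\mathcal C)$) and Theorem~\ref{Theorem2} (for $\mathrm{d_H}(\mathbf{w},\mathrm{CL}(\mathcal C))=1$) bound the defect by $1$ and by $2$ respectively, so it equals $0$, $1$, or $2$. A defect of $0$ means $\mathbf{y}$ is a coset leader and Criterion~1 places $\mathbf{w}=\mathbf{y}+\mathbf{e}_{ij}$ in ${\tt List}$; a defect of $1$ triggers Criterion~2 in exactly the same way.

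The delicate case, which I expect to be the main obstacle, is a defect of $2$: by Theorem~\ref{Theorem1} this cannot occur when $\mathbf{w}$ is a coset leader, so it forces $\mathrm{d_H}(\mathbf{w},\mathrm{CL}(\mathcal C))=1$ and the use of Criterion~3. Here the equality clause of Theorem~\ref{Theorem2} supplies precisely the two support conditions defining the index set $I$, namely $\mathrm{supp_{gen}}(\mathbf{y})[i]\neq\emptyset$ and $\mathrm{supp_{gen}}(\mathbf{v}')[i]=\emptyset$ for all $\mathbf{v}'\in\mathrm{CL}(\mathbf{y})$. What is \emph{not} automatic, and is exactly where the $q$-ary case is harder than the binary one, is the remaining requirement $(i,j)\in\mathrm{supp_{gen}}(\mathbf{y})$ in the definition of $I$: the peeled digit must survive in $\mathbf{y}$. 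To secure this I would, in the defect-$2$ situation, peel from the unique differing coordinate $i_0$ (which keeps $\mathbf{y}$ within Hamming distance $1$ of $\mathbf{c}$) and choose there a $p$-ary digit of value at least $2$, so that $(i,j)$ remains in $\mathrm{supp_{gen}}(\mathbf{y})$ and Criterion~3 applies. The crux is to handle the residual configuration in which $i_0$ carries no such digit—in particular the postponed case $w_{i_0}=0$, where $\mathbf{w}\subset\mathbf{c}$—by showing that some alternative admissible peeling either lowers the defect to $0$ or $1$, returning us to Criteria~1 or~2, or else contradicts $\mathrm{d_H}(\mathbf{w},\mathrm{CL}(\mathcal C))\le1$. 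Establishing this dichotomy, i.e. reconciling the Hamming-distance-$1$ hypothesis with both the weight-defect bookkeeping and the index constraint of Criterion~3, is the heart of the argument; once it is in place Criterion~3 gives $\mathbf{w}=\mathbf{y}+\mathbf{e}_{ij}\in{\tt List}$ and the induction closes.
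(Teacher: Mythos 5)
Your setup coincides with the paper's: induction along the well-founded order $\prec$, with Eq.~(\ref{eq:subpal}) guaranteeing that peeling one $\mathbf e_{ij}$ produces a $\prec$-smaller word, base case $\mathbf 0$, and a case analysis on the defect $\mathrm{w_H}(\mathbf y)-\mathrm{w_H}(N(\mathbf y))\in\{0,1,2\}$, with Criteria~1 and~2 disposing of defects $0$ and $1$. Up to that point you match the paper. But the defect-$2$ case is exactly where your proposal stops being a proof: you correctly name the obstruction (Criterion~3 demands $(i,j)\in\mathrm{supp_{gen}}(\mathbf y)$, i.e.\ the peeled digit must survive in $\mathbf y$), you propose to peel a digit of coefficient at least $2$ at the differing coordinate $i_0$, and you then concede that when no such digit exists (e.g.\ $w_{i_0}=0$, or every digit at $i_0$ has coefficient $1$) you would need a ``dichotomy'' that you do not establish. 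That concession is the heart of the theorem: in the $q$-ary case the configuration you postpone is the generic one, and no single alternative peeling can be expected to close it, because the induction hypothesis only covers words at Hamming distance at most $1$ from $\mathrm{CL}(\mathcal C)$, and peeling at any coordinate other than $i_0$ generally destroys that property, so the resulting word need not even lie in ${\tt List}$.

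The paper closes this case not with one peeling but with a descent-and-reascent chain inside coordinate $i$: it writes $\mathbf v=\mathbf v_0+\sum_{j_k\in\mathrm{supp_{gen}}(\mathbf v)[i]}\mathbf e_{ij_k}$ with $\mathrm{supp_{gen}}(\mathbf v_0)[i]=\emptyset$, forms the partial sums $\mathbf v_0,\mathbf v_1,\ldots,\mathbf v_L=\mathbf v$, and proves a dichotomy: either some intermediate $\mathbf v_k$ ($1\leq k\leq L-1$) satisfies $\mathrm{d_H}(\mathbf v_k,\mathrm{CL}(\mathcal C))=1$, or every $\mathbf v_h$ inherits the defect-$2$ conditions of $\mathbf v$, in which case $\mathrm{d_H}(\mathbf v_0,\mathrm{CL}(\mathbf v_0))=1$. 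Either way there is a $k\in\{0,\ldots,L-1\}$ with $\mathbf v_k$ within distance $1$ of a coset leader, so the induction hypothesis puts $\mathbf v_k$ in ${\tt List}$; one application of Criterion~2 then puts $\mathbf v_{k+1}$ in ${\tt List}$, and \emph{successive} applications of Criterion~3 walk back up the chain through $\mathbf v$ to $\mathbf w$. This iterated use of Criterion~3 along a chain of words all modified at the same coordinate $i$ --- rather than a single, cleverly chosen peeling --- is the missing idea; it is precisely what makes the support constraints of Criterion~3 satisfiable at every step. Without it (or an actual proof of the dichotomy you conjecture, which is in effect what the chain argument provides), your proposal does not prove the theorem.
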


\begin{proof}
We will proceed by induction on $\mathbb F_q^n$ with the order $\prec$.
The statement is true for $\mathbf 0 \in \mathbb F_q^n$. Now for the inductive step, 
we assume that the desired property is true for any word $\mathbf u \in \mathbb F_q^n$ such tht  $\mathrm{d_H}(\mathbf u,\mathrm{CL}(\mathcal C))\leq 1$ and also $\mathbf u$ is smaller than an arbitrary but fixed $\mathbf w\setminus \{ \mathbf 0\}$ with respect to  $\prec$ and $\mathrm{d_H}(\mathbf w,\mathrm{CL}(\mathcal C))\leq 1$, i.e.
$$\hbox{if } \mathrm{d_H}(\mathbf u,\mathrm{CL}(\mathcal C))\leq 1 \hbox{ and } \mathbf u \prec \mathbf w \hbox{ then }\mathbf u \in \tt{List}.$$
We will show that that the previous conditions imply that $\mathbf w$ is also in $\tt{List}$.

Let $\mathbf w=\mathbf v + \mathbf e_{ij}$, with $(i,j)\in \mathrm{supp_{gen}}(\mathbf w)$ then  $\mathbf v\prec \mathbf w$ by Eq.~(\ref{eq:subpal}). If $\mathrm{d_H}(\mathbf v,\mathrm{CL}(\mathcal C))\leq 1$ then by the induction hypothesis we have that $\mathbf v \in \tt{List}$ and by Criteria 1 or 2 in Definition~\ref{List-Def-LC} it is guaranteed that $\mathbf w\in \tt{List}$. Then, let us suppose that $\mathrm{w_H}(\mathbf v)= \mathrm{w_H}(\mathrm{CL}(\mathbf v))+2$. Since $\mathrm{d_H}(\mathbf w,\mathrm{CL}(\mathcal C))\leq 1$ we have $\mathrm{supp_{gen}}(\mathbf v)[i]\neq \emptyset$ and $\mathrm{supp_{gen}}(\mathbf v^\prime)[i]=\emptyset$ for all $v^\prime\in \mathrm{CL}(\mathbf v)$. Thus we can write $\mathbf v$ as
\begin{equation}
\label{eq:vDcomp}
{\mathbf v}= {\mathbf v}_0 + \sum_{j_k\in \mathrm{supp_{gen}}(\mathbf v)[i]} \mathbf e_{ij_k}, \hbox{ and }\mathrm{supp_{gen}}(\mathbf v_0)[i]=\emptyset.
\end{equation}

Let $L=|\mathrm{supp}_{gen}(\mathbf v)[i]|$ and let us renumber the partial sums of the words  in the previous equation as $\mathbf v_l={\mathbf v}_0 + \sum_{k=1}^{l} \mathbf e_{ij_k(l)}$, where $l=1,\ldots,L$. Note that $\mathbf v_L=\mathbf v$. 
It can be proved that 
\begin{enumerate}
  \item either there exists an element $k\in \{1,\ldots,L-1\}$ such that  $\mathrm{d_H}(\mathbf v_k,\mathrm{CL}(\mathcal C))= 1$ 
  \item or for all $h\in \{1,\ldots,L\}$ the word $\mathbf v_h$ satisfies the same conditions as $\mathbf v$, that is $\mathrm{w_H}(\mathbf v_h)= \mathrm{w_H}(\mathrm{CL}(\mathbf v_h))+2$ and $\mathrm{supp_{gen}}(\mathbf v_h)[i]\neq \emptyset$ and $\mathrm{supp_{gen}}(\mathbf v_h^\prime)[i]=\emptyset$, for all $\mathbf v_h^\prime\in \mathrm{CL}(\mathbf v_h)$.
\end{enumerate} 

{ In} the second case it can be proved that $\mathrm{d_H}(\mathbf v_0,\mathrm{CL}(\mathbf v_0))=1$; therefore  in both cases there exists an element $k\in \{0,\ldots,L-1\}$ such that  $\mathrm{d_H}(\mathbf v_k,\mathrm{CL}(\mathbf v_k))=1$. Thus since $\mathbf v_k\prec \mathbf w$ and taking into account the induction hypothesis we have that $\mathbf v_k\in \tt{List}$. By Criterion 2 in Definition~\ref{List-Def-LC}, $\mathbf v_{k+1}\in \tt{List}$ and hence with a successive application of Criterion 3 we have that  $\mathbf w\in \tt{List}$. 
\end{proof}

\subsection{Leader codewords}
\begin{definition}
\label{LCwords}
The set of \emph{leader codewords} of a linear code $\mathcal C$  is defined as
\begin{equation}\nonumber
\mathrm L(\mathcal C) =\left\{ 
\begin{array}{c}
\mathbf n_1 + \mathbf e_{ij} - \mathbf n_2  \in \mathcal C\setminus \{\mathbf 0 \} \mid 
\mathbf n_1 + \mathbf e_{ij} \in \mathrm{SF}(\mathbb F_q^n),\\ \mathbf n_2 \in \mathrm{CL}(\mathcal C),
\mathrm{d_H}(\mathbf n_1,\mathrm{CL}(\mathcal C))\leq 1\\ \mbox{ and }\mathrm{d_H}(\mathbf n_1 + \mathbf e_{ij}, \mathrm{CL}(\mathcal C))\leq 1
\end{array}
\right\}.
\end{equation}
\end{definition}

Note that the definition is a bit more complex that the one for binary codes in  \cite{borges:2014} due to the fact that in the general case not all coset leaders need to be ancestors of coset leaders. The name of leader codewords comes from the fact that one could compute all coset leaders of a corresponding word knowing the set $\mathrm L(\mathcal C)$  (see \cite[Algorithm~3]{borges:2014} that can be translated straightfoward to the $q$-ary case).

\begin{theorem}[Properties of  $\mathrm L(\mathcal C)$]
\label{t:LCodW}
Let $\mathcal C$ be a linear code then
\begin{enumerate}
\item $\mathrm L(\mathcal C)$ is a test set for $\mathcal C$.
\item Let $\mathbf w$ be an element in  $\mathrm L(\mathcal C)$ then 
$$\mathrm{w_H}(\mathbf w) \leq 2 \rho(\mathcal C) +1$$ where $\rho(\mathcal C)$ is the covering radius of the code $\mathcal C$.
\item If $\mathbf w \in \mathrm L(\mathcal C)$ then $$ \mathcal X (\mathrm D(\mathbf 0)) \cap (\mathrm D(\mathbf w)\cup \mathcal X \mathrm (D(\mathbf w)))\neq \emptyset.$$
\item If $\mathcal X (\mathrm D (\mathbf 0)) \cap \mathrm D(\mathbf w) \neq \emptyset$ then $\mathbf w \in \mathrm L(\mathcal C)$.
\end{enumerate}
\end{theorem}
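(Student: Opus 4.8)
The plan is to handle the four items in order of increasing difficulty: the weight bound (2), then the two geometric characterizations (3) and (4), and finally to read off the test-set property (1). For (2) I would write $\mathbf w=\mathbf n_1+\mathbf e_{ij}-\mathbf n_2$ as in Definition~\ref{LCwords} and bound $\mathrm{w_H}(\mathbf w)=\mathrm{w_H}\bigl((\mathbf n_1+\mathbf e_{ij})-\mathbf n_2\bigr)\leq \mathrm{w_H}(\mathbf n_1+\mathbf e_{ij})+\mathrm{w_H}(\mathbf n_2)$, using that the support of a difference is contained in the union of the two supports. Because $\mathrm{d_H}(\mathbf n_1+\mathbf e_{ij},\mathrm{CL}(\mathcal C))\leq 1$ there is a coset leader within Hamming distance $1$, so $\mathrm{w_H}(\mathbf n_1+\mathbf e_{ij})\leq \rho(\mathcal C)+1$, while $\mathbf n_2\in\mathrm{CL}(\mathcal C)$ gives $\mathrm{w_H}(\mathbf n_2)\leq \rho(\mathcal C)$; adding these yields $2\rho(\mathcal C)+1$. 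This is the routine part.

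For (3) I would exhibit an explicit witness. Setting $\mathbf u=\mathbf n_1+\mathbf e_{ij}$ we have $\mathbf u-\mathbf w=\mathbf n_2\in\mathrm{CL}(\mathcal C)$, so $\mathbf n_2$ is a leader of the coset of $\mathbf u$ and $\mathrm{d_H}(\mathbf u,\mathbf w)=\mathrm{w_H}(\mathbf n_2)$ is exactly the weight of that coset, i.e. the distance from $\mathbf u$ to the nearest codeword. Since no codeword is closer, $\mathbf w$ attains this minimum and hence $\mathbf u\in\mathrm D(\mathbf w)$. By definition $\mathrm{d_H}(\mathbf u,\mathrm{CL}(\mathcal C))\leq 1$; when it equals $1$ we get $\mathbf u\in\mathcal X(\mathrm D(\mathbf 0))$ and $\mathbf u$ is the required point of $\mathcal X(\mathrm D(\mathbf 0))\cap\mathrm D(\mathbf w)$. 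The only leftover case is $\mathbf u\in\mathrm{CL}(\mathcal C)$, where I would instead use $\mathbf n_1=\mathbf u-\mathbf e_{ij}$: it lies at Hamming distance $1$ from the leader $\mathbf u$, so (provided $\mathbf n_1\notin\mathrm{CL}(\mathcal C)$) it belongs to $\mathcal X(\mathrm D(\mathbf 0))\cap\bigl(\mathrm D(\mathbf w)\cup\mathcal X(\mathrm D(\mathbf w))\bigr)$; the remaining degenerate configuration, with $\mathbf n_1$ and $\mathbf u$ both leaders, is dispatched by replacing $\mathbf u$ with a standard-form neighbour outside $\mathrm{CL}(\mathcal C)$.

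Item (4) reverses this. Given $\mathbf u\in\mathcal X(\mathrm D(\mathbf 0))\cap\mathrm D(\mathbf w)$, the word $\mathbf u$ is at Hamming distance exactly $1$ from $\mathrm{CL}(\mathcal C)$ and so is not a leader; since $\mathbf w$ is the nearest nonzero codeword and $\mathbf 0$ cannot be nearer, $\mathbf n_2:=\mathbf u-\mathbf w$ has weight equal to its coset weight, i.e. $\mathbf n_2\in\mathrm{CL}(\mathcal C)$. It then suffices to find $(i,j)\in\mathrm{supp}_{\mathrm{gen}}(\mathbf u)$ with $\mathbf u-\mathbf e_{ij}$ in standard form and $\mathrm{d_H}(\mathbf u-\mathbf e_{ij},\mathrm{CL}(\mathcal C))\leq 1$, because then $\mathbf w=(\mathbf u-\mathbf e_{ij})+\mathbf e_{ij}-\mathbf n_2$ exhibits $\mathbf w\in\mathrm L(\mathcal C)$. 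Let $\mathbf n$ be a leader with $\mathrm{d_H}(\mathbf u,\mathbf n)=1$, say differing at position $i_0$. If $u_{i_0}\neq 0$ then any generator $\mathbf e_{i_0 j}$ with $(i_0,j)\in\mathrm{supp}_{\mathrm{gen}}(\mathbf u)$ works, since $\mathbf u-\mathbf e_{i_0 j}$ still differs from $\mathbf n$ only in coordinate $i_0$. The genuinely $q$-ary difficulty — and what I expect to be the main obstacle — is the case $u_{i_0}=0$, in which no coordinate of $\mathbf u$ can be removed towards $\mathbf n$. Here I would invoke Theorem~\ref{Theorem2} applied to $\mathbf u$ (which satisfies its hypothesis $\mathrm{d_H}(\mathbf u,\mathrm{CL}(\mathcal C))=1$): for each $(i,j)\in\mathrm{supp}_{\mathrm{gen}}(\mathbf u)$ it gives $\mathrm{w_H}(\mathbf u-\mathbf e_{ij})\leq \mathrm{w_H}\bigl((\mathbf u-\mathbf e_{ij})+\mathcal C\bigr)+2$, and the task is to show that the distance-two (equality) case cannot hold for every generalized-support position at once. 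The second assertion of Theorem~\ref{Theorem2}, forcing $\mathrm{supp}_{\mathrm{gen}}(\mathbf v')[i]=\emptyset$ for the leaders $\mathbf v'$ of the coset of $\mathbf u-\mathbf e_{ij}$ in that case, is the lever I would use to contradict $\mathbf u\in\mathrm D(\mathbf w)$ and thereby secure a good $(i,j)$.

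Finally, for (1) I would not run a separate descent but deduce it from (4). By definition each zero neighbour $\mathbf z\in\mathcal Z(\mathcal C)$ shares a boundary point with $\mathrm D(\mathbf 0)$; locating such a point inside $\mathrm D(\mathbf z)$ at Hamming distance one from the leaders produces an element of $\mathcal X(\mathrm D(\mathbf 0))\cap\mathrm D(\mathbf z)$, so (4) gives $\mathcal Z(\mathcal C)\subseteq\mathrm L(\mathcal C)$. Since the set of zero neighbours is already a test set \cite{barg:1998} and any set of codewords containing a test set is again a test set, $\mathrm L(\mathcal C)$ is a test set. In all four items the one genuinely new step is the decomposition in (4): extracting from a word at Hamming distance one from the coset leaders a single admissible generator $\mathbf e_{ij}$ whose deletion preserves that distance, which is precisely where the $p$-ary expansion and the weight-compatible order of Section~\ref{sec:Prel} enter essentially.
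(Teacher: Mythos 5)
Your item (2) is fine, and in item (3) your first case ($\mathbf n_1+\mathbf e_{ij}\notin\mathrm{CL}(\mathcal C)$) is exactly the paper's argument; your shortcut of using $\mathbf n_1$ itself when $\mathbf n_1+\mathbf e_{ij}\in\mathrm{CL}(\mathcal C)$ but $\mathbf n_1\notin\mathrm{CL}(\mathcal C)$ is also valid, since then $\mathbf n_1\in\mathcal X(\mathrm D(\mathbf 0))$ and, being at distance $1$ from $\mathbf n_1+\mathbf e_{ij}\in\mathrm D(\mathbf w)$, it lies in $\mathrm D(\mathbf w)\cup\mathcal X(\mathrm D(\mathbf w))$. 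But the residual configuration, with $\mathbf n_1+\mathbf e_{ij}$ and $\mathbf n_1$ both coset leaders, is precisely where the content of item (3) lies, and ``replacing $\mathbf u$ with a standard-form neighbour outside $\mathrm{CL}(\mathcal C)$'' is not an argument: an arbitrary non-leader neighbour of $\mathbf u$ has no reason to bear any relation to $\mathrm D(\mathbf w)$. The paper resolves this case with a chain argument: choose a coordinate $l$ where $\mathbf u=\mathbf n_1+\mathbf e_{ij}$ and $\mathbf n_2$ differ, and move from $\mathbf u$ one generator $\mathbf e_{l i_k}$ at a time, transferring the content of $\mathbf n_2$ at coordinate $l$. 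A weight count forces some step of this walk to leave $\mathrm{CL}(\mathcal C)$ for the first time, which puts that word in $\mathcal X(\mathrm D(\mathbf 0))$; and because the walk changes only coordinate $l$, its translate by $-\mathbf w$ stays within Hamming distance $1$ of the leader $\mathbf n_2$, which is what places it in $\mathrm D(\mathbf w)\cup\mathcal X(\mathrm D(\mathbf w))$. Some such one-coordinate chain is unavoidable here, and your sketch is missing it.

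In item (4) you correctly isolate the claim the paper leaves implicit (finding $(i,j)\in\mathrm{supp_{gen}}(\mathbf u)$ with $\mathrm{d_H}(\mathbf u-\mathbf e_{ij},\mathrm{CL}(\mathcal C))\leq 1$), but your plan for the case $u_{i_0}=0$ aims at the wrong target: Theorem~\ref{Theorem2} controls the weight excess $\mathrm{w_H}(\mathbf y)-\mathrm{w_H}(\mathbf y+\mathcal C)$, not the Hamming distance from $\mathbf y$ to $\mathrm{CL}(\mathcal C)$, and the definition of $\mathrm L(\mathcal C)$ requires the latter (a word of weight one above its coset weight can still be far from every coset leader), so even ruling out the equality case of Eq.~(\ref{eq:+2}) would leave you short. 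In fact the case is vacuous: if a leader $\mathbf n$ differs from $\mathbf u$ only in a coordinate where $\mathbf u$ vanishes, then $\mathrm{w_H}(\mathbf n)=\mathrm{w_H}(\mathbf u)+1$; but coset weight is $1$-Lipschitz under Hamming distance, so $\mathrm{w_H}(\mathbf n+\mathcal C)\leq\mathrm{w_H}(\mathbf u+\mathcal C)+1\leq\mathrm{w_H}(\mathbf u)$ because $\mathbf u\notin\mathrm{CL}(\mathcal C)$, contradicting $\mathbf n\in\mathrm{CL}(\mathcal C)$. (This short observation is also what completes the paper's own one-line assertion in its proof of (4).) Finally, your derivation of item (1) rests on $\mathcal Z(\mathcal C)\subseteq\mathrm L(\mathcal C)$, which you do not prove and which does not follow from the definitions: a common boundary point of $\mathrm D(\mathbf z)$ and $\mathrm D(\mathbf 0)$ may lie outside both regions (recall $\delta(A)=\mathcal X(A)\cup\mathcal X(\mathbb F_q^n\setminus A)$), so it need not produce an element of $\mathcal X(\mathrm D(\mathbf 0))\cap\mathrm D(\mathbf z)$, and item (4) is then inapplicable. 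The paper instead proves (1) by the descent argument of the binary case \cite{borges:2014}; what it extracts from item (4) is only that $\mathrm L(\mathcal C)$ contains every minimal test set in the sense of \cite{barg:1998}, which is weaker than, and does not require, the inclusion you assert.
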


\begin{proof}
Items 1) and 2) follow directly from the definition of leader codewords and the proof of the same results {in the binary case} (see~\cite{borges:2014}).

{ 3)} Let $\mathbf w\in \mathrm L(\mathcal C)$, then $\mathbf w=\mathbf n_1 + \mathbf e_{ij} - \mathbf n_2$, where $\mathbf n_1,\, \mathbf n_2$ are elements in $\mathbb F_q^n$ such that $\mathbf n_1 + \mathbf e_{ij} \in \mathrm{SF}(\mathbb F_q^n)$, $\mathbf n_2 \in \mathrm{CL}(\mathcal C)$, $\mathrm{d_H}(\mathbf n_1,\mathrm{CL}(\mathcal C))\leq 1$ and $\mathrm{d_H}(\mathbf n_1 + \mathbf e_{ij},\mathrm{CL}(\mathcal C))\leq 1$. 

\begin{itemize}
  \item If $\mathbf n_1 + \mathbf e_{ij} \notin \mathrm{CL}(\mathcal C)$, then $\mathbf n_1 + \mathbf e_{ij} \in \mathcal X (\mathrm D(\mathbf 0))$ and $(\mathbf n_1 + \mathbf e_{ij}) -\mathbf w =\mathbf n_2\in \mathrm{CL}(\mathcal C)$ implies that $\mathbf n_1 + \mathbf e_{ij} \in \mathrm D(\mathbf w)$. 

\item If $\mathbf n_1 + \mathbf e_{ij} \in \mathrm{CL}(\mathcal C)$ we define  ${\mathbf n_1^\prime}={\mathbf n_1^\prime}_0=\mathbf n_1 + \mathbf e_{ij}$. It is clear that  ${\mathbf n_1^\prime},\,\mathbf n_2\in \mathrm{CL}(\mathbf n_2)$. Since $\mathbf w\neq \mathbf 0$ let $l$ be {a number in the set}  $\{1,\ldots,  n\}$ such that
{$\mathbf n_1^\prime[l] - \mathbf n_2[l]\neq 0$}. Let $\mathbf n_2[l]=\sum_{j=1}^T\,\mathbf e_{li_j}$, for $1\leq h\leq T$, $\mathbf {n_1^\prime}_h=\mathbf n_1^\prime + \sum_{j=1}^h\,\mathbf e_{li_j}$ and ${\mathbf n_2}_h=\mathbf n_2 - \sum_{j=1}^h\,\mathbf e_{li_j}$. 

If there exists an $h$ ($1\leq h<T$) such that ${\mathbf n_1^\prime}_h\notin \mathrm{CL}(\mathcal C)$ and ${\mathbf n_1^\prime}_{h-1}\in \mathrm{CL}(\mathcal C)$ then these two conditions imply that 
${\mathbf n_1^\prime}_h\in \mathcal X (\mathrm D(\mathbf 0))$ and  that ${\mathbf n_2}_h= {\mathbf n_1^\prime}_h - \mathbf w$ is either  a coset leader (${\mathbf n_1^\prime}_h\in \mathrm D(\mathbf w)$) or  $\mathrm{d_H}({\mathbf n_2}_h,{\mathbf n_2})=1$ (${\mathbf n_1^\prime}_h\in \mathcal X (\mathrm D(\mathbf w))$). 

If there is no such and $h$ ($1\leq h<T$) satisfying the condition then $\mathrm{w_H}({\mathbf n_1^\prime}_T) = {\mathrm w_H}({\mathbf n_2}_T)+1$, which means that ${\mathbf n_1^\prime}_T$ is not a coset leader and ${\mathbf n_1^\prime}_{T-1}$ is a coset leader. Then using the same idea of the previous paragraph we have that ${\mathbf n_1^\prime}_T\in \mathcal X (\mathrm D(\mathbf 0))$ and ${\mathbf n_1^\prime}_T\in \mathrm D(\mathbf w)\cup \mathcal X (\mathrm D(\mathbf w))$. 
\end{itemize}

{4)} If $\mathcal X (\mathrm D (\mathbf 0)) \cap \mathrm D(\mathbf w) \neq \emptyset$, let $\mathbf v\in \mathcal X (\mathrm D (\mathbf 0)) \cap \mathrm D(\mathbf w)$. The first condition  $\mathbf v\in \mathcal X (\mathrm D (\mathbf 0))$ implies $\mathbf v=\mathbf n_1 +\mathbf e_{ij}$ for some $\mathbf n_1$ such that $\mathrm{d_H}(\mathbf n_1,\mathrm{CL}(\mathcal C))\leq 1$ and $(i,j)\in \mathrm{supp_{gen}}(\mathbf v)$. On the other hand, $\mathbf n_1+\mathbf e_{ij} \in \mathrm D(\mathbf w)$ implies that $\mathbf n_2=(\mathbf n_1 + \mathbf e_{ij}) - \mathbf w\in \mathrm{CL}(\mathcal C)$. Therefore, $\mathbf w= \mathbf n_1 + \mathbf e_{ij}- \mathbf n_2\in \mathrm L(\mathcal C)$. 
\end{proof} 

\begin{remark}
Note that item {3)} in Theorem~\ref{t:LCodW} implies that any leader codeword is a zero neighbour however, this is one of the differences with the binary case, it is not always true that for a leader codewords $\mathbf w$ we have that $\mathcal X (\mathrm D (\mathbf 0)) \cap \mathrm D(\mathbf w) \neq \emptyset$, although by item {\it 4} we have that $\mathbf w$ is a leader codeword provided this condition is satisfied. Furthermore, {\it 4} guarantees that the set of leader codewords contains all the minimal test set according to its cardinality (see \cite{barg:1998}). As a conclusion we could say because of all this properties in Theorem~\ref{t:LCodW} that the the set of leader codewords is a ``good enought" subset  of the set of zero neightbours.
\end{remark}

\section{Correctable and uncorrectable errors}
\label{Tset-LCW}

We define the relation $\subset_1$ in the additive monoid which describe exactly the relation $\subset$ in the vector space $\mathbb F_q^n$. Given $\mathbf x,\mathbf y\in (\mathbb F_q^n,+)$
\begin{equation}
\label{eq:subset1}
 \mathbf x\subset_1 \mathbf y \hbox{ if } \mathbf x\subset \mathbf y \hbox{ and } \mathrm{supp_{gen}}(\mathbf x)\cap \mathrm{supp_{gen}}(\mathbf y-\mathbf x)=\emptyset.
\end{equation}

Note that this definition translates the binary case situation in \cite{klove}. In this case given a $\mathbf y\in (\mathbb F_q^n,+)$ { there are more words $x\in (\mathbb F_q^n,+)$  such that $\mathbf x\subset \mathbf y$ than if we consider $\mathbf x, \mathbf y$ as elements in the vector space $\mathbb F_q^n$}. Of course, any relation $\mathbf x\subset \mathbf y$ in $\mathbb F_q^n$ as a vector space is also true in the additive monoid, {but it is not true the other way round}.

The set ${E}^0({\mathcal C})$ of \textit{correctable errors} of a linear code $\mathcal C$ is the set of  the minimal { elements with respect to}  $\prec$ in each coset. The elements of the set ${E}^1({\mathcal C})=\mathbb F_q^n\setminus {E}^0({\mathcal C})$ will be  called \textit{uncorrectable errors}. A {\em trial set} $T\subset {\mathcal C}\setminus {\mathbf 0}$ of the code ${\mathcal C}$ is a set which has the following property
\begin{equation*}
\label{eq:trialset}
{\mathbf y}\in {E}^0({\mathcal C}) \mbox{ if and only if } {\mathbf y} \leq {\mathbf y} + {\mathbf c}, \mbox{ for all } {\mathbf c}\in T.\end{equation*}

Since $\prec$ is a monotone $\alpha$-ordering on $\mathbb F_q^n$, the set of correctable and uncorrectable errors form a monotone structure. Namely, if $\mathbf x\subset_1 \mathbf y$ then $\mathbf x \in {E}^1({\mathcal C})$ implies $\mathbf y \in {E}^1({\mathcal C})$ and $\mathbf y \in {E}^0({\mathcal C})$ implies $\mathbf x \in {E}^0({\mathcal C})$. In the {general case $q\neq 2$ there is} a  difference with respect to the binary case, there may be words $\mathbf y^\prime\in \mathbb F_q^n$ s.t. $\mathrm{supp_{gen}}(\mathbf y^\prime)=\mathrm{supp_{gen}}(\mathbf y)$, {$\mathbf y^\prime\subset \mathbf y$} and $\mathbf y^\prime$ could be either a correctable error or an uncorrectable error, so, the monotone structure it is not sustained by $\subset$ in the additive monoid $(\mathbb F_q^n,+)$. Note that in Definition~\ref{List-Def-LC}, the set denoted by ${\cal N}$ is the set ${E}^0({\mathcal C})$ corresponding to $\prec$.

Let the set of minimal uncorrectable errors $M^1({\mathcal C})$ be the set of $\mathbf y\in {E}^1({\mathcal C})$ such that, if $\mathbf x \subseteq_1 \mathbf y$ and $\mathbf x\in {E}^1({\mathcal C})$, then $\mathbf x= \mathbf y$. In a similar way, the set of maximal correctable errors is the set $M^0({\mathcal C})$ of elements $\mathbf x\in {E}^0({\mathcal C})$ such that, if $\mathbf x\subseteq_1 \mathbf y$ and $\mathbf y\in {E}^0({\mathcal C})$, then $\mathbf x= \mathbf y$.

For $\mathbf c\in {\mathcal C}\setminus \mathbf 0$, a {\em larger half} is defined as a minimal word $\mathbf u$ in the ordering $\preceq$ such that $\mathbf u - \mathbf c \prec \mathbf u$. The weight of such a word $\mathbf u$ is such that $$\mathrm{w_H}(\mathbf c)\leq 2\mathrm{w_H}(\mathbf u)\leq \mathrm{w}_H(\mathbf c)+ 2,$$ see \cite{klove} for more details. The set of larger halves for a codeword $\mathbf c$ is denoted by $\mathrm{L}(\mathbf c)$, and for $U\subseteq {\mathcal C}\setminus \mathbf 0$ the set of larger halves for elements of $U$ is denoted by $\mathrm{L}(U)$. Note that $\mathrm{L}({\mathcal C})\subseteq {E}^1({\mathcal C})$.

For any ${\mathbf y\in \mathbb F_q^n}$, let $H(\mathbf y)=\{c\in {\mathcal C:\, \mathbf y- \mathbf c \prec \mathbf y}\}$, and we have $\mathbf y\in {E}^0({\mathcal C})$ if and only if $H(\mathbf y)=\emptyset$, and $\mathbf y\in {E}^1({\mathcal C})$ if and only if $H(\mathbf y)\neq \emptyset$. 

In   \cite[Theorem~1]{klove} there is a  characterization of the set $M^1({\mathcal C})$ in terms of $H(\cdot)$ and larger halves of the set of minimal codewords $M({\mathcal C})$ for the binary case. It is easy to proof that this Theorem and    \cite[Corollary~3]{klove} are also true for any linear code.

\begin{proposition}[Corollary 3 in \cite{klove} ]
\label{prop:klove}
Let ${\mathcal C}$ be a linear code and $T\subseteq {\mathcal C}\setminus \mathbf 0$. The following statements are equivalen,:
\begin{enumerate}
\item $T$ is a trial set for ${\mathcal C}$.
\item If $\mathbf y \in M^1({\mathcal C})$, then $T\cap H(\mathbf y)\neq \emptyset$.
\item $M^1({\mathcal C})\subseteq \mathrm{L}(T)$.
\end{enumerate}
\end{proposition}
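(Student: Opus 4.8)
The plan is to establish the two equivalences (1)$\Leftrightarrow$(2) and (2)$\Leftrightarrow$(3), using throughout only three ingredients already available: the translation invariance of the weight compatible order $\prec$ (property~2 of an admissible order, stated with $\oplus$), the monotone behaviour of $E^0(\mathcal C)$ and $E^1(\mathcal C)$ under $\subseteq_1$ recorded earlier in this section, and the reading of a larger half of $\mathbf c$ as a $\subseteq_1$-minimal word $\mathbf u$ with $\mathbf u - \mathbf c \prec \mathbf u$. The starting observation is a reformulation of (1). Since $\mathbf c \in \mathcal C$, the word $\mathbf y + \mathbf c$ lies in the coset $\mathbf y + \mathcal C$; hence if $\mathbf y \in E^0(\mathcal C)$ it is the $\prec$-least element of its coset and $\mathbf y \prec \mathbf y + \mathbf c$ holds automatically for every $\mathbf 0 \neq \mathbf c \in T$. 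Thus the ``only if'' half of the trial-set condition is free, and (1) is equivalent to the contrapositive of its ``if'' half: for every $\mathbf y \in E^1(\mathcal C)$ there is $\mathbf c \in T$ reducing $\mathbf y$, which --- using $\mathcal C = -\mathcal C$, so that reduction by $\mathbf c$ and by $-\mathbf c$ are interchangeable --- reads $T \cap H(\mathbf y) \neq \emptyset$. So (1) is precisely statement (2) quantified over all of $E^1(\mathcal C)$ rather than over $M^1(\mathcal C)$ only.

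The equivalence (1)$\Leftrightarrow$(2) is then the reduction from $E^1(\mathcal C)$ to its minimal elements. One direction is trivial because $M^1(\mathcal C) \subseteq E^1(\mathcal C)$. For the converse, given $\mathbf y \in E^1(\mathcal C)$ I would choose a $\subseteq_1$-minimal $\mathbf x$ with $\mathbf x \subseteq_1 \mathbf y$ and $\mathbf x \in E^1(\mathcal C)$; such $\mathbf x$ exists since $\subseteq_1$ is a partial order on the finite set $\mathbb F_q^n$, and by transitivity of $\subseteq_1$ it lies in $M^1(\mathcal C)$. By (2) there is $\mathbf c \in T \cap H(\mathbf x)$, that is $\mathbf x - \mathbf c \prec \mathbf x$. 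Writing $\mathbf y = \mathbf x \oplus (\mathbf y - \mathbf x)$, where the two summands have disjoint generalized support by the definition of $\subseteq_1$, and applying translation invariance gives $\mathbf y - \mathbf c = (\mathbf x - \mathbf c) \oplus (\mathbf y - \mathbf x) \prec \mathbf x \oplus (\mathbf y - \mathbf x) = \mathbf y$, so $\mathbf c \in T \cap H(\mathbf y)$. Hence the $H$-condition propagates from $M^1(\mathcal C)$ to all of $E^1(\mathcal C)$, yielding (1).

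For (2)$\Leftrightarrow$(3), the implication (3)$\Rightarrow$(2) is immediate: if $\mathbf y \in M^1(\mathcal C) \subseteq \mathrm L(T)$ then $\mathbf y$ is a larger half of some $\mathbf c \in T$, whence $\mathbf y - \mathbf c \prec \mathbf y$ and $\mathbf c \in T \cap H(\mathbf y)$. For (2)$\Rightarrow$(3), take $\mathbf y \in M^1(\mathcal C)$ and, by (2), a codeword $\mathbf c \in T \cap H(\mathbf y)$, so $\mathbf y - \mathbf c \prec \mathbf y$. It remains to verify that $\mathbf y$ is actually a larger half of $\mathbf c$, i.e. $\subseteq_1$-minimal among the words satisfying this inequality: were there $\mathbf u \subsetneq_1 \mathbf y$ with $\mathbf u - \mathbf c \prec \mathbf u$, then $\mathbf c \in H(\mathbf u)$ would force $\mathbf u \in E^1(\mathcal C)$, contradicting the minimality of $\mathbf y$ in $M^1(\mathcal C)$. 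Therefore $\mathbf y \in \mathrm L(\mathbf c) \subseteq \mathrm L(T)$, and $M^1(\mathcal C) \subseteq \mathrm L(T)$.

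I expect the delicate point to be the transfer step inside (1)$\Leftrightarrow$(2): in contrast with the binary case, the monotone structure is carried by $\subseteq_1$ in the additive monoid and not by the vector-space order $\subseteq$, so the argument must run entirely through $\oplus$, and one must check that the monoid element $(\mathbf x - \mathbf c) \oplus (\mathbf y - \mathbf x)$ does represent $\mathbf y - \mathbf c$ and may be compared with $\mathbf y$ via $\prec$ before translation invariance is invoked --- the subtlety being that $\oplus$ does not reduce coefficients modulo $p$. The accompanying point to pin down is that the minimality in the definition of a larger half must be read as $\subseteq_1$-minimality, matching that of $M^1(\mathcal C)$; once these are fixed, every remaining step transcribes Kl\o ve's binary argument for \cite[Corollary~3]{klove} verbatim.
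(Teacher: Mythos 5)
First, a point of reference: the paper itself offers no proof of this proposition. Immediately before the statement it says only that \cite[Theorem~1]{klove} and \cite[Corollary~3]{klove} ``are also true for any linear code'' and that this is easy to prove; your attempt must therefore be measured against Kl{\o}ve's binary argument, which is exactly what you try to transcribe. Your treatment of (2)$\Leftrightarrow$(3) is essentially sound (modulo reading larger-half minimality as $\subseteq_1$-minimality), but the step you yourself flag as delicate --- the transfer step inside (2)$\Rightarrow$(1) --- is wrong, and not fixably so. You need: if $\mathbf x\subseteq_1\mathbf y$ and $\mathbf x-\mathbf c\prec\mathbf x$, then $\mathbf y-\mathbf c\prec\mathbf y$, and you justify it by ``translation invariance'' of $\prec$. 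But the weight compatible order $\prec$ is not translation invariant; only the tie-breaking order $\prec_1$ is required to be admissible, and no weight compatible order can be admissible (admissibility forces $\mathbf e_{11}\prec\mathbf e_{11}\oplus\mathbf e_{11}$, while $\mathbf e_{11}\oplus\mathbf e_{11}$ represents a word of smaller Hamming weight). What Kl{\o}ve's binary proof actually uses is the weight identity $\mathrm{w_H}(\mathbf y)=\mathrm{w_H}(\mathbf x)+\mathrm{w_H}(\mathbf y-\mathbf x)$ for $\mathbf x\subseteq\mathbf y$, and this is precisely what fails when $q=p^m$ with $m>1$: disjoint generalized supports can occupy the same Hamming coordinates. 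Concretely, over $\mathbb F_4=\mathbb F_2[\beta]$ take $\mathcal C=\langle(1,1,1)\rangle\subseteq\mathbb F_4^3$, $\mathbf c=(1,1,1)$, $\mathbf x=(1,1,0)$, $\mathbf y=(1+\beta,1+\beta,0)$. Then $\mathbf x\subset_1\mathbf y$ and $\mathbf x-\mathbf c=(0,0,1)$, so $\mathbf c\in H(\mathbf x)$; yet $\mathbf y-\mathbf c=(\beta,\beta,1)$ has weight $3>2=\mathrm{w_H}(\mathbf y)$, so $\mathbf c\notin H(\mathbf y)$. In fact the only codeword reducing $\mathbf y$ is $(1+\beta)\cdot(1,1,1)$.

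This is not a presentational gap you could patch, because the counterexample extends to the proposition itself as literally stated with the paper's definitions. In that same code one checks that $M^1(\mathcal C)$ consists exactly of the weight-two words of $E^1(\mathcal C)$ whose nonzero entries lie in $\{1,\beta\}$ (every word of larger weight, or with an entry $1+\beta$, has a proper $\subset_1$-subword of type $(a,a,0)$ up to permutation with $a\in\{1,\beta\}$, and such subwords are uncorrectable), and each of these is reduced by $(1,1,1)$ or by $(\beta,\beta,\beta)$. Hence $T=\{(1,1,1),(\beta,\beta,\beta)\}$ satisfies statements (2) and (3), but $T$ is not a trial set: $\mathbf y=(1+\beta,1+\beta,0)\in E^1(\mathcal C)$ satisfies $\mathbf y\prec\mathbf y+\mathbf c$ for both $\mathbf c\in T$. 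So (2) does not imply (1) at this level of generality; a correct statement needs either a different partial order in place of $\subseteq_1$ or additional hypotheses, and no transcription of the binary proof can succeed --- your attempt exposes that the paper's ``it is easy to prove'' assertion is itself unjustified. Two smaller remarks: in odd characteristic your reduction of (1) to ``every $\mathbf y\in E^1(\mathcal C)$ meets $T\cap H(\mathbf y)$'' silently replaces $T$ by $-T$ (the trial-set condition uses $\mathbf y+\mathbf c$ while $H$ uses $\mathbf y-\mathbf c$), and ``$\mathcal C=-\mathcal C$'' does not make these interchangeable for a fixed subset $T$; and the monotone-structure statement you list as an ingredient is likewise only asserted, never proved, in the paper, and it suffers from the same digit-overlap phenomenon.
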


Now we will formulate the result which relates the trial sets {for a given weight compatible order $\prec$} and the set of leader codewords.
\begin{theorem}
\label{teo:ts-LCW}
Let $\mathcal C$ be a linear code and $\mathrm L(\mathcal C)$ the set of leaders codewords for $\mathcal C$, the following statements are satisfied.
\begin{enumerate}
\item $\mathrm L(\mathcal C)$ is a trial set for any given $\prec$.
\item $\mathrm L(\mathcal C)$ contains any minimal trial set for any given $\prec$.
\item Algorithm~2 in \cite{borges:2014} can be adapted to compute a set of leader codewords which is a trial set $T$ for a given $\prec$ such that  it satisfies the following property
\begin{description}
\item[] For any $\mathbf c\in T$, there exists $\mathbf y\in M^1({\mathcal C})\cap \mathrm{L}(\mathbf c)$ s.t. $\mathbf y- \mathbf c\in {E}^0({\mathcal C})$.
\end{description}
\end{enumerate}
\end{theorem}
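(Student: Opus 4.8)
The plan is to establish the three items by leveraging the characterization in Proposition~\ref{prop:klove} together with the structural properties of $\mathrm L(\mathcal C)$ already proved in Theorem~\ref{t:LCodW}. For item (1), the cleanest route is to verify condition (2) of Proposition~\ref{prop:klove}: I would show that for every $\mathbf y\in M^1(\mathcal C)$ the set $\mathrm L(\mathcal C)\cap H(\mathbf y)$ is nonempty. Given a minimal uncorrectable error $\mathbf y$, its defining minimality with respect to $\subseteq_1$ means that removing any canonical generator $\mathbf e_{ij}$ from its standard form lands in $E^0(\mathcal C)$, i.e.\ produces a correctable error that is a coset leader (the minimal element of its coset under $\prec$). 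Writing $\mathbf y = \mathbf n_1 + \mathbf e_{ij}$ with $\mathbf n_1$ a coset leader, and letting $\mathbf n_2 \in \mathrm{CL}(\mathbf y)$ be a coset leader of $\mathbf y$'s own coset, I would check that $\mathbf w = \mathbf n_1 + \mathbf e_{ij} - \mathbf n_2$ meets all the defining conditions of Definition~\ref{LCwords} (so $\mathbf w\in \mathrm L(\mathcal C)$) and simultaneously satisfies $\mathbf y - \mathbf w = \mathbf n_2 \prec \mathbf y$ (so $\mathbf w \in H(\mathbf y)$). Here I would invoke the weight bound via Eq.~(\ref{eq:subpal}) and the fact that $\mathbf n_2$ is a genuine coset leader while $\mathbf y$ is uncorrectable, forcing the strict inequality.

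For item (2), I would argue the reverse containment directly from condition (3) of Proposition~\ref{prop:klove}: any trial set $T$ must satisfy $M^1(\mathcal C)\subseteq \mathrm L(T)$, and I would show that every larger half arising from a minimal uncorrectable error is in fact captured by a leader codeword. The argument mirrors item (1): given $\mathbf y \in M^1(\mathcal C)$ realized as a larger half of some $\mathbf c \in T$, the minimality of $\mathbf y$ and the weight relation $\mathrm{w_H}(\mathbf c)\leq 2\,\mathrm{w_H}(\mathbf y)\leq \mathrm{w_H}(\mathbf c)+2$ let me decompose $\mathbf c$ so that it either equals, or forces into $T$, a corresponding element of $\mathrm L(\mathcal C)$. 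The point is that any minimal trial set must already contain, for each $\mathbf y\in M^1(\mathcal C)$, a codeword in $H(\mathbf y)$, and by the analysis of item (1) the natural such codeword is precisely a leader codeword; minimality then pins it down.

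Item (3) is the most delicate and is where I expect the main obstacle to lie. The strategy is to take the binary construction of Algorithm~2 from \cite{borges:2014} and replay it over $\mathbb F_q^n$ using $\mathrm{Can}(\mathbb F_q)$ and the standard form $\mathrm{SF}(\mathbb F_q^n)$, maintaining the object $\tt List$ of Definition~\ref{List-Def-LC} and extracting leader codewords as the algorithm discovers collisions between the coset representative and a previously computed coset leader. The property to be guaranteed is that each output $\mathbf c \in T$ witnesses some $\mathbf y \in M^1(\mathcal C)\cap \mathrm L(\mathbf c)$ with $\mathbf y - \mathbf c \in E^0(\mathcal C)$; this follows because the algorithm produces $\mathbf c$ exactly at the moment a word $\mathbf y$ just crosses from correctable to uncorrectable, so $\mathbf y$ is a larger half of $\mathbf c$ and $\mathbf y - \mathbf c$ is a coset leader, hence correctable. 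The hard part is that, unlike the binary case, a coset leader need not be an ancestor of another coset leader but only of a word at Hamming distance $1$ from one (the phenomenon flagged after Theorem~\ref{Theorem1}); consequently the three-criterion generation in Definition~\ref{List-Def-LC}, and in particular the restricted index set $I$ in Criterion~3, must be threaded carefully through the adapted algorithm to ensure no minimal uncorrectable error is missed. I would devote the bulk of the proof to verifying that Criterion~3 exactly compensates for this discrepancy, so that the witnessing correspondence $\mathbf c \mapsto \mathbf y$ remains surjective onto the relevant portion of $M^1(\mathcal C)$, invoking Theorem~\ref{Theorem3} to confirm that every word within Hamming distance $1$ of $\mathrm{CL}(\mathcal C)$ is reached.
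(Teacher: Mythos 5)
Your overall architecture is the same as the paper's (item (1) via condition (2) of Proposition~\ref{prop:klove}, item (2) via condition (3), item (3) by restricting the algorithm to $\mathbf t\in M^1(\mathcal C)$ and the correctable coset leader $\mathbf t_k$, outputting $\mathbf t-\mathbf t_k$), but there is a genuine gap at the base of your item (1), and it is exactly the binary-versus-$q$-ary subtlety this paper is about. You claim that $\subseteq_1$-minimality of $\mathbf y\in M^1(\mathcal C)$ implies that removing \emph{any} canonical generator $\mathbf e_{ij}$ from $\mathbf y$ lands in $E^0(\mathcal C)$, so that $\mathbf n_1=\mathbf y-\mathbf e_{ij}$ is a coset leader. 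By Eq.~(\ref{eq:subset1}), $\mathbf y-\mathbf e_{ij}\subset_1 \mathbf y$ requires $\mathrm{supp_{gen}}(\mathbf y-\mathbf e_{ij})\cap \mathrm{supp_{gen}}(\mathbf e_{ij})=\emptyset$, i.e.\ the coefficient of $\mathbf e_{ij}$ in the standard form of $\mathbf y$ must be exactly $1$. When $p$ is odd that coefficient can be $2,\ldots,p-1$ (and there need not exist \emph{any} position with coefficient $1$; think of a component equal to $2$ over $\mathbb F_3$), in which case $\mathbf y-\mathbf e_{ij}$ is not a $\subset_1$-predecessor of $\mathbf y$, the minimality of $\mathbf y$ says nothing about it, and it may itself be uncorrectable. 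So your construction of the coset leader $\mathbf n_1$, and with it the verification that $\mathbf y-\mathbf n_2$ satisfies Definition~\ref{LCwords}, collapses; only for $p=2$ does your argument go through.

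The paper's proof repairs this with a different decomposition: it removes the \emph{whole} $i$-th component, $\mathbf n_1=\mathbf y-\mathbf y_i$, which \emph{is} a proper $\subset_1$-predecessor (its $i$-th generalized support block is empty and $\mathbf y_i$ is supported only there), hence $\mathbf n_1\in E^0(\mathcal C)\subseteq \mathrm{CL}(\mathcal C)$. Then to check Definition~\ref{LCwords} one writes $\mathbf y=(\mathbf y-\mathbf e_{ij})+\mathbf e_{ij}$ and only needs $\mathrm{d_H}(\mathbf y-\mathbf e_{ij},\mathrm{CL}(\mathcal C))\leq 1$ and $\mathrm{d_H}(\mathbf y,\mathrm{CL}(\mathcal C))\leq 1$, both immediate from Hamming distance at most $1$ to $\mathbf n_1$; no claim that $\mathbf y-\mathbf e_{ij}$ is itself a coset leader is needed. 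The same flaw propagates into your item (2), which explicitly leans on ``the analysis of item (1)''; moreover, your phrase ``minimality then pins it down'' hides the real work there: an arbitrary element $\mathbf c$ of a minimal trial set serving $\mathbf y\in M^1(\mathcal C)$ only yields $\mathbf n_2=\mathbf y-\mathbf c\prec \mathbf y$, and $\mathbf n_2$ need not be a coset leader; the paper must split into the cases $\mathrm{w_H}(\mathbf n_2)<\mathrm{w_H}(\mathbf y)$ and $\mathrm{w_H}(\mathbf n_2)=\mathrm{w_H}(\mathbf y)$, the latter requiring a different witness decomposition of $\mathbf c$ (using $-\mathbf n_1$ as the coset-leader term). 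Your item (3) is consistent in spirit with the paper's (very terse) adaptation of Algorithm~2 of \cite{borges:2014} and is essentially fine, though the completeness discussion via Criterion~3 and Theorem~\ref{Theorem3} is auxiliary rather than the crux.
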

\begin{proof}$\,$\\
{Proof of 1) }We will prove statement 2 of Proposition~\ref{prop:klove}. Let $\mathbf y \in M^1({\mathcal C})$, let $i$ such that $\mathrm{supp_{gen}}(\mathbf y)[i]\neq \emptyset$ and $\mathbf n_1=\mathbf y -\mathbf y_i$. Since $\mathbf y \in M^1({\mathcal C})$ we have that $\mathbf n_1\in {E}^0({\mathcal C})$, thus it is also a coset leader. On the other hand, let $\mathbf n_2\in {E}^0({\mathcal C})$ such that $\mathbf n_2 \in \mathrm{CL}(\mathbf y)$ and $\mathbf c=\mathbf y - \mathbf n_2$. It is clear that $\mathbf c$ is a leader codeword and $\mathbf y - \mathbf c =\mathbf n_2\prec \mathbf y$. Therefore  $\mathbf c\in H(\mathbf y)$.

\noindent {Proof of 2) }By statement 3 of Proposition~\ref{prop:klove}, a minimal trial set should contain at least as larger halves the elements in the set $M^1({\mathcal C})$. But it can be seen that a set of codewords having as larger halves at least the set $M^1({\mathcal C})$ is already a trial set.

Let us show that any codeword with a larger half belonging to $M^1({\mathcal C})$ is a leader codeword. Let $\mathbf c\in {\mathcal C}$ such that $\mathrm{L}(\mathbf c)\cap M^1({\mathcal C})\neq \emptyset$. Then there exists an element $\mathbf y\in M^1({\mathcal C})$ such that $\mathbf y\in \mathrm{L}(\mathbf c)$. Thus $\mathbf c= \mathbf y-\mathbf n_2$ with $\mathbf n_2 \prec \mathbf y$. On the other hand let $i\in \{1,\ldots,n\}$ such that $\mathrm{supp_{gen}}(\mathbf y)[i]\neq \emptyset$, and let $\mathbf n_1=\mathbf y - \mathbf y_i$. Then  $\mathbf y\in M^1({\mathcal C})$ implies that $\mathbf n_1\in {E}^0({\mathcal C})$ and therefore it is a coset leader. from the fact $\mathbf n_2\prec \mathbf y$ we have two cases,

\begin{description}
  \item[i.-] $\mathrm w_H(\mathbf y)>\mathrm w_H(\mathbf n_2)$, in this case since $\mathbf n_1\in \mathrm{CL}(\mathcal C)$ we have that $\mathbf n_2\in \mathrm{CL}(\mathcal C)$ and then $\mathbf c$ is a leader codeword.

\item[ii.-] $\mathrm w_H(\mathbf y)=\mathrm w_H(\mathbf n_2)$, then $d_H(-\mathbf n_2+\mathbf y_i,\mathrm{CL}(\mathcal C))\leq 1$, then, $\mathbf c=-\mathbf n_2+\mathbf y_i+\mathbf n_1$ is a leader codeword.

\end{description}
\noindent {Proof of 3) }In the steps of the construction of the leader codewords in Algorithm 2 (Steps 11 - 13 for the binary case) it is enough to state the condition $\mathbf t\in M^1({\mathcal C})$ for $\mathbf t$ and take $\mathbf t_k$ only equal to the coset leader of $\mathrm{CL}(\mathbf t)$, that is the corresponding correctable error and add the codeword $\mathbf t-\mathbf t_k$ to the set $\mathrm L(\mathcal C)$. 

\end{proof}

{{\bf Acknowlegments}.  M. Borges-Quintana has been supported by a Post-doctorate scholarship at the University of Valladolid (09-2014 to 02-2015) by Erasmus Mundus Program, Mundus Lindo Project.} E. Mart\'inez-Moro is partially funded by Spanish MinEco research grant  
MTM2012-36917-C03-02.

%





\end{document}